\documentclass[a4paper,onecolumn,11pt]{quantumarticle}
\pdfoutput=1
\usepackage[utf8]{inputenc}
\usepackage[english]{babel}
\usepackage[T1]{fontenc}
\usepackage{amsmath}
\usepackage{hyperref}

\usepackage{multirow}
\usepackage{tikz}
\usepackage{quantikz}
\usepackage{subcaption}
\usepackage{lipsum}
\usepackage{pgfplots}
\usetikzlibrary{patterns}

\usepackage{amsthm}
\newtheorem{theorem}{Theorem}
\usepackage[numbers,sort&compress]{natbib}

\usepackage[ruled,vlined]{algorithm2e}

\begin{document}

\title{A parallel and distributed fixed-point quantum search algorithm for solving SAT problems}

\author{He Wang}
\email{15584191817@163.com}
\author{Jinyang Yao}
\affiliation{Information Engineering University, 62 Kexue Avenue, Zhengzhou, 450001, Henan, China}

\maketitle

\begin{abstract}
 The Boolean satisfiability (SAT) problem is of fundamental importance in computer science and many application domains. For Grover’s algorithm, solving the SAT problem requires $\mathcal{O}(\sqrt{2^n})$ queries—where $n$ denotes the number of logic variables in the problem. However, Grover’s algorithm suffers from the Soufflé problem: specifically, when the number of solutions is unknown, terminating the algorithm too early or too late leads to a significant reduction in the probability of obtaining a solution. In this paper, we propose a parallel fixed-point (PFP) search algorithm to solve the SAT problem. By exploiting entanglement, each clause in the conjunctive normal form (CNF) formula can be processed independently, leading to a significant reduction in circuit depth. We also discuss how to perform the algorithm in distributed manner. These make the PFPS algorithm particularly suitable for the noisy intermediate-scale quantum (NISQ) era.
\end{abstract}

\section{Introduction}

Quantum computing, an emerging technological frontier, is gradually transitioning from theoretical paradigms to practical implementations, and demonstrating transformative potential in computational capabilities. A key milestone in this shift is humanity’s entry into the NISQ era \cite{Preskill2018quantumcomputingin}—a phase defined by quantum devices equipped with hundreds to thousands of qubits. While noise and limited coherence times remain unresolved in this era, these quantum systems can already execute small-scale operational tasks. The construction of large-scale universal quantum computers remains a significant challenge today, primarily hindered by the inherent noise in quantum systems and the limited depth of quantum circuits. Against this backdrop, within the NISQ era, the development of novel quantum algorithms and models with enhanced physical realizability has emerged as a compelling and promising research direction. 

The distributed quantum computing (DQC) constitutes one such feasible and valuable topic. DQC has been explored via a variety of methodologies and perspectives. As established in existing research, three general approaches are commonly adopted for distributed quantum computing scenarios. The first approach involves directly decomposing the quantum circuit associated with a target computational problem into multiple quantum sub-circuits; however, this requires the implementation of quantum communication protocols (such as quantum teleportation) to coordinate interactions between these sub-circuits \cite{lin2024parallel}. A typical illustration of this approach is the distributed Shor's algorithm presented in \cite{yimsiriwattana2004distributed}, though a key limitation is the substantial number of teleportation operations it necessitates. The second approach entails generating multiple local solutions using quantum algorithms analogous to the original problem-specific algorithm, and then synthesizing these local results to determine the final solution to the overall problem. For instance, this strategy is employed in the distributed quantum phase estimation method in \cite{li2017application}. The third approach, which has been proposed more recently, focuses on decomposing the Boolean function to be computed into a collection of subfunctions \cite{avron2021quantum}; the solution to the original problem is then derived by computing all or a selected subset of these subfunctions, as exemplified in \cite{tan2022distributed,qiu2024distributed,zhou2024distributed}.

SAT solving stands as a cornerstone problem in the realm of classical computing. For any given propositional logic formula, its primary function is to determine if there exist truth value assignments to propositional variables that can make the formula true. SAT boasts extensive applications across diverse domains, ranging from theorem proving, model checking, and circuit design, to name just a few. Beyond its real-world utility, SAT holds a pivotal role in computational and complexity theories. This is due to two key attributes: first, it is an NP-complete problem; second, a large number of other computational problems can be reduced to SAT. Developing efficient algorithms for solving SAT, therefore, holds far-reaching significance for a wide array of computer science domains and even fields outside of computing. The DPLL algorithm serves as one of the most prevalent classical algorithms for SAT solving which exhibits a worst-case bound of $ \mathcal{O}(2^n)$, where $n$ denotes the count of propositional variables within the target formula \cite{davis1962machine}. 

Grover's search algorithm is regarded as one of the powerful algorithms for solving the classical SAT problem \cite{grover1996fast}. This is because it exhibits an impressive query complexity of $\mathcal{O}(\sqrt{2^n})$ when searching through an n-bit classical unordered database. However, when the number of solutions is unknown in advance, Grover's algorithm encounters the well-known soufflé problem—where stopping the algorithm either too early or too late leads to a significant decrease in the probability of obtaining a solution. Some early improved algorithms can guarantee solving the soufflé problem, but only at the cost of losing the quadratic speedup \cite{grover2005fixed}. Thus, of greater appeal are the later-proposed fixed-point quantum search algorithms that retain the ability of quadratic speedup \cite{mizel2009critically,yoder2014fixed}. Another common approach is to employ quantum counting to derive the (approximate) count of target elements prior to implementing Grover’s algorithm \cite{brassard1998quantum}. However, to obtain the relatively accurate number of solutions, the time cost of this method may be higher than that of the aforementioned fixed-point search method.

In this paper, we design a parallel fixed-point (PFP) quantum search algorithm for solving SAT problems \cite{mizel2009critically,lin2024parallel}. Our algorithm first enables parallel processing of clauses, which significantly reduces the Oracle's runtime. Additionally, we discuss how to perform the algorithm in a distributed manner. Consequently, the total runtime of the algorithm can be markedly decreased by more than a factor of $\mathcal{O}(m)$, where $m$ denotes the number of clauses. Second, distributed processing is particularly suitable for the present NISQ era—this is because the number of available logical qubits in each quantum computer is currently small, insufficient to support large-scale computing tasks. Finally, this algorithm can avoid the soufflé problem while ensuring that it preserves the quantum characteristics of $\mathcal{O}(\sqrt{2^n})$. 

The remainder of the paper is structured as follows. In Sec. \ref{Preliminaries}, we first provide an introduction to the SAT problem, followed by an overview of Grover’s algorithm. In Sec.\ref{PFP search algorithm}, we elaborate on the algorithm in detail, prove its correctness, and conduct numerical simulations. Finally, we conclude with a summary in Sec.\ref{Summary}.

\section{Preliminaries}\label{Preliminaries}

\subsection{SAT problems}

At the core of a SAT problem lie $n$ Boolean variables, designated as the set  $\{x_j\}$ with $ j $ ranging from $1$ to $n$. Each variable $x_j$ is restricted to two truth values: true, which is represented by either $0$ or the $+$ symbol, and false, indicated by  $1$ or the $-$ symbol. A key building block in SAT logic is the literal $l_j$ —this component can take one of two forms: it is either the original variable $x_j$ or the negation of $x_j$, symbolized as $\overline{x_j}$ .
When focusing on $k$-SAT, a specialized subset of SAT, each clause $ C_i$ is constructed by combining exactly $k$ literals using the logical OR operator ( $\lor$ ). A full $k$-SAT instance is defined by a Boolean formula $F$ structured in conjunctive normal form (CNF). In this CNF structure,  $m$ individual clauses are linked together in sequence through the logical AND operator ( $\land $), resulting in the formula:
\begin{equation}
F = C_1 \land C_2 \land \cdots \land C_m
\end{equation}
As an example, consider 3-SAT (where $k = 3$): a typical clause $C_1$ might be structured as $C_1 = \overline{x_1} \lor x_3 \lor \overline{x_5}$. A clause is deemed satisfied if at least one of its literals evaluates to true. For the entire SAT instance to qualify as satisfiable, there must exist a specific assignment of truth values to the $n$ Boolean variables such that all $m$ clauses within formula $F$  are satisfied at the same time.
In essence, the SAT problem is classified as a decision problem. Its central objective is to determine whether a given SAT instance (i.e., a particular CNF formula $F$) is satisfiable (meaning there is at least one valid variable assignment that works) or unsatisfiable (with no such valid assignment existing).

\subsection{Grover's algorithm}

The initial step of Grover’s algorithm involves constructing a coherent superposition state with uniform amplitude distributed across all $N=2^n$ computational basis states. This state is formulated as $|\psi_0\rangle = \frac{1}{\sqrt{N}} \sum_{k=0}^{N-1} |k\rangle=\cos\theta/2 |\psi^*_\perp\rangle+\sin\theta/2 |\psi^*\rangle$, where the basis state $|k\rangle$ acts as the quantum encoding of the $k$-th classical data entry within the target database. $|\psi^*\rangle$ is a uniform superposition of target states, $|\psi^*\rangle=\frac{1}{\sqrt{M}}\sum_{k: f(k)=1} |k\rangle$, and $|\psi^*_\perp\rangle$ is its orthonormal complementary. We define $\sin\theta=\sqrt{\frac{M}{N}} $, where $M$ is the number of solutions.
Thereafter, the algorithm enters an iterative loop that alternately implements two core quantum operators: the Grover diffusion operator, denoted $G = -e^{i\pi |\psi_0\rangle\langle\psi_0|}$, and the oracle operator, defined as $O = e^{i Q \pi}$. In this formulation, $Q = \sum_{k: f(k)=1} |k\rangle\langle k|$ functions as a projection operator that maps onto the subspace spanned by all solution states. Within the two-dimensional Hilbert space $\{|\psi^*\rangle,|\psi^*_\perp\rangle\}$, we can also define $A=|\psi^*\rangle\langle\psi^*_\perp|+|\psi^*_\perp\rangle\langle\psi^*|$ and $B=i|\psi^*\rangle\langle\psi^*_\perp|-i|\psi^*_\perp\rangle\langle\psi^*|$, whose actions are analogous to those of Pauli operators in qubit space. The query function $f$ is formally defined such that $f(k) = 1$ if the classical data $k$ satisfies the problem’s solution criteria, and $f(k) = 0$ in all other cases. For example, $f$ can be a CNF formula.  

After $T = \frac{\pi}{4}\sqrt{\frac{N}{M}}$ iterative cycles of alternating $G$ and $O$ applications, the evolved quantum state converges to a close approximation of the target solution state $|\psi^*\rangle$. The algorithm’s time complexity (equivalent to its query complexity, given the dominance of oracle calls) is quantified as $T = \mathcal{O}(\sqrt{N})$. This represents a quadratic speedup relative to classical algorithms for unstructured search, which typically require $\mathcal{O}(N)$ queries to guarantee finding a solution. However, the challenge here lies in the fact that when the number of solutions is unknown in advance—a scenario that is typically the case—it becomes difficult to determine the exact number of iterations, leading to the soufflé problem. Mizel proposed a measurement-driven quantum search algorithm that ensures both quadratic speedup and avoidance of the soufflé problem \cite{mizel2009critically}. In this paper, our goal is to parallelize this algorithm to reduce its total runtime, and further implement it in a distributed manner—adapting it to the limitation of the small number of logical qubits available in a single quantum computer during the NISQ era.

\section{PFP search algorithm}\label{PFP search algorithm}

\subsection{Algorithm description}

The PFP search algorithm requires the use of three quantum registers and one classical register, respectively. These three quantum registers are the variable register, the formula register, and the control qubit register. The variable register contains different variable qubit groups, whose definition will be provided below. The formula register consists of one qubit that represents the calculation result of the formula. The control qubit register is composed of one control qubit. The classical register is used to store the measurement results of the control qubits. To enable parallel processing in the algorithm, we adopt the same strategy as that in \cite{lin2024parallel}: if a variable $x_j$ appears $r_j$ times across the clauses, we introduce an additional $r_j-1$ auxiliary qubits, ensuring that distinct qubits (the original $x_j$ and the auxiliary ones) appear in different clauses. This thus forms a variable qubit group $\{x_j^0, x_j^1, \ldots, x_j^{r_j-1}\}$. To ensure that all qubits in this variable qubit group yield consistent results during the final measurement, we initialize the entire qubit group into a GHZ state $\frac{1}{\sqrt{2}}(|000...\rangle+|111...\rangle)$. The additional independent variable qubits are initialized to $|+\rangle$. The algorithm proceeds as follows.

\begin{algorithm}
  \SetAlgoLined
  \caption{PFP search algorithm}
  \KwIn{GHZ states corresponding to different variable qubit groups; The independent variable qubits initialized to the \(|+\rangle\) state; The formula qubit initialized to the \(|0\rangle\) state; The control qubit initialized to the \(|1\rangle\) state.}
  \KwOut{A solution that satisfies the formula.}
  \For{each iteration}{
      Apply controlled-oracle $O_{ctrl}(\phi_t)$;\\
      Apply controlled-diffuser $G_{ctrl}$;\\
      Measure the control qubit,\\
      \If{ The measurement result is $0$}{Break;}
    \Else{Continue to the next iteration;}   
  }
  \Return{The control qubit collapses to $|0\rangle$. The formula qubit collapses to $|0\rangle$. The variable qubits are in a uniform superposition state of the solutions.}
\end{algorithm}

To construct controlled-oracle, we first build a circuit for each clause. The circuit is shown in Fig.\ref{fig:ClauseCircuit}. It should be noted that the qubits in each clause are exclusive. $R_i$ is selected according to the following rules \cite{fernandes2019using,lin2024parallel},

\begin{align}
R_i=\left\{\begin{aligned}
&X  \quad\text{if $l_i=x_i$}\\
&I  \quad\text{if $l_i=\overline{x_i}$}
\end{aligned}\right.\end{align}

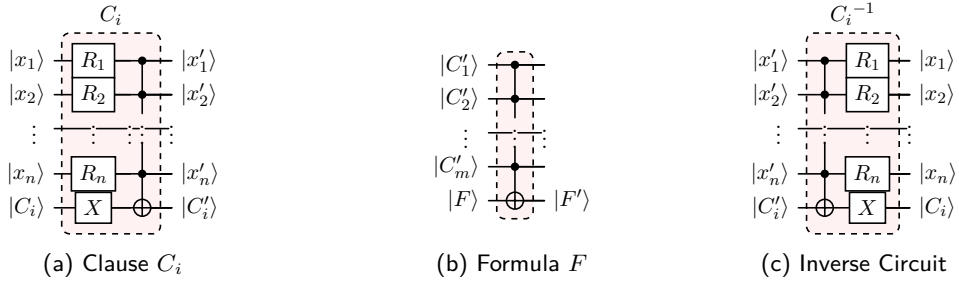
\begin{figure}
\centering
\begin{subfigure}[b]{0.4\linewidth}
\centering
\scalebox{0.75}{
\begin{quantikz}[row sep={6mm,between origins}, column sep=2mm]
\lstick{\ket{x_1}} & [1mm] \gate{R_1} \gategroup[5,steps=3,style={dashed, rounded corners,fill=pink!20, inner xsep=1pt, inner ysep=1.5pt}, background,label style={label position=above,anchor=south,yshift=-0.2cm}]{{\sc $C_i$}}&\qw  & [-2mm]\ctrl{1} & [1mm]\qw \rstick{\ket{x_1'}} \\
\lstick{\ket{x_2}} & \gate{R_2}& \qw & \ctrl{1} & \qw \qw \rstick{\ket{x_2'}} \\ 
\hspace{-7mm} \vdots & \vdots & \vdots & \vdots & \vdots \\[2mm]
\lstick{\ket{x_n}} & \gate{R_n}& \qw& \ctrl{1} \vqw{-1} & \qw \qw \rstick{\ket{x_n'}} \\
\lstick{\ket{C_i}} & \gate{X} & \qw & \targ{} & \qw \qw \rstick{\ket{C_i'}}
\end{quantikz}
}
\caption{Clause $C_i$} \label{fig:ClauseCircuit}
\end{subfigure}%
\hfill
\begin{subfigure}[b]{0.28\linewidth}
\centering
\scalebox{0.75}{
\begin{quantikz}[row sep={6mm,between origins}, column sep=2mm]
\lstick{$\ket{C_1'}$} & [1mm]\ctrl{1} \gategroup[5,steps=1,style={dashed, rounded corners,fill=pink!20, inner xsep=1pt, inner ysep=1pt}, background,label style={label position=above,anchor=south,yshift=-0.2cm}]{{\sc }} & [1mm]\qw \\
\lstick{$\ket{C_2'}$} & \ctrl{1} & \qw \\
\hspace{-8mm} \vdots & \vdots & \vdots \\
\lstick{$\ket{C_m'}$} & \ctrl{1} \vqw{-1} & \qw \\
\lstick{\ket{F}} & \targ{} & \qw \rstick{\ket{{F}'}}
\end{quantikz}
}
\\ 
\vspace{2mm}
\caption{Formula $F$} \label{fig:FormulaCircuit}
\end{subfigure}%
\hfill
\begin{subfigure}[b]{0.3\linewidth}
\centering
\scalebox{0.75}{
\begin{quantikz}[row sep={6mm,between origins}, column sep=2mm]
\lstick{\ket{x_1'}} & [1mm] \ctrl{1} \gategroup[5,steps=2,style={dashed, rounded corners,fill=pink!20, inner xsep=1pt, inner ysep=1.5pt}, background,label style={label position=above,anchor=south,yshift=-0.2cm}]{{\sc ${C_i}^{-1}$}} & \gate{R_1} & [1mm]\qw \rstick{\ket{x_1}} \\
\lstick{\ket{x_2'}} & \ctrl{1} & \gate{R_2} & \qw \qw \rstick{\ket{x_2}} \\ 
\hspace{-7mm} \vdots & \vdots & \vdots & \vdots \\[2mm]
\lstick{\ket{x_n'}} & \ctrl{1} \vqw{-1} & \gate{R_n} & \qw \qw \rstick{\ket{x_n}} \\
\lstick{\ket{C_i'}} & \targ{} & \gate{X} & \qw \qw \rstick{\ket{C_i}}
\end{quantikz}
}
\caption{Inverse Circuit} \label{fig:ClauseInverseCircuit}
\end{subfigure}%
\caption{Quantum circuit for clause, formula and the inverse of clause.}
\end{figure}

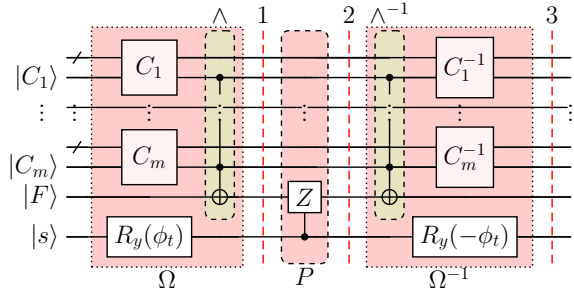
\begin{figure}[tb]\centering\scalebox{0.66}{\begin{quantikz}[row sep={4mm,between origins}, column sep=4mm, font=\Large] & \qw \qwbundle{} & \gate[wires=2, style={fill=pink!20}][11mm]{C_1} \gategroup[7,steps=2,style={dotted,fill=red!20, inner xsep=5pt, inner ysep=2pt}, background,label style={label position=below,anchor=south,yshift=-0.75cm}]{{\sc $\Omega$}} & \qw \gategroup[6,steps=1,style={dashed, rounded corners,fill=olive!20, inner xsep=0pt, inner ysep=0pt}, background,label style={label position=above,anchor=south,yshift=-0.2cm}]{{\sc $\wedge$}} & [-2mm]\qw \slice{1} & [6mm]\qw \gategroup[7,steps=1,style={dashed, rounded corners,fill=red!20, inner xsep=0pt, inner ysep=0pt}, background,label style={label position=below,anchor=south,yshift=-0.75cm}]{{\sc $P$}} & [-5mm] \qw \slice{2} & [9mm] \qw \gategroup[7,steps=2,style={dotted,fill=red!20, inner xsep=5pt, inner ysep=2pt}, background,label style={label position=below,anchor=south,yshift=-0.75cm}]{{\sc $\Omega^{-1}$}} \qw\gategroup[6,steps=1,style={dashed, rounded corners,fill=olive!20, inner xsep=0pt, inner ysep=0pt, thick}, background, label style={label position=above,anchor=south,yshift=-0.2cm}]{{\sc $\wedge^{-1}$}}  & [-1mm]\gate[wires=2, style={fill=pink!20}]{C_1^{-1}} & \qw \slice{3} & [2mm]\qw \\ \lstick{\ket{C_1}} & \qw & \qw & \ctrl{1} & \qw & \qw & \qw & \ctrl{1} & \qw & \qw & \qw \\[2mm]   \hspace{-10mm} \vdots & \hspace{-5mm} \vdots & \vdots & \vdots & & \vdots & & \vdots & \hspace{-2mm} \vdots & & \vdots \\[4mm] & \qw \qwbundle{} & \gate[wires=2, style={fill=pink!20}][11mm]{C_m} & \qw & \qw & \qw & \qw & \qw &[-1mm]\gate[wires=2, style={fill=pink!20}]{C_m^{-1}} & \qw & \qw \\ \lstick{\ket{C_m}} & \qw & \qw & \ctrl{1} \vqw{-2} & \qw & \qw & \qw & \ctrl{1} \vqw{-2} & \qw & \qw & \qw \\[2mm] \lstick{\ket{F}} & \qw & \qw & \targ{} & \qw & \gate[wires=1]{Z} & \qw & \targ{} & \qw & \qw & \qw\\[4mm]\lstick{\ket{s}} & \qw & \gate{R_y(\phi_t)} & \qw & \qw & \ctrl{-1} & \qw & \qw & \gate{R_y(-\phi_t)} & \qw & \qw \end{quantikz}}\caption{Quantum circuit for controlled parallel oracle.}\label{fig:ParallelOracleControlled}\end{figure}

The circuit of the controlled oracle comprises three components: the $\Omega$ module, the $P$ module, and the $\Omega^{-1}$ module. In the $\Omega$ module, first, the values of the clauses are computed via clause circuits. The control qubit undergoes a rotation about the y-axis with a rotation angle of $\phi_t$. Subsequently, the computation results $|C_i\rangle$ from all clause circuits are processed through a multi-controlled NOT gate to perform a conjunction (see Fig.\ref{fig:FormulaCircuit} and olive zone in Fig.\ref{fig:ParallelOracleControlled}), so as to compute the result of the formula. Module P primarily implements a controlled-Z gate to invert the phase of the target state. The $\Omega^{-1}$ module is the inverse circuit of the $\Omega$ module and serves to restore the amplitudes of all quantum states in Fig.\ref{fig:ParallelOracleControlled}. Unlike conventional circuits, which can only compute all clauses sequentially, this circuit computes all clauses in parallel simultaneously. For a circuit with $m$ clauses, the conventional oracle runs sequentially with a time complexity of $\mathcal{O}(m)$, whereas the parallel oracle operates with a time complexity of only $\mathcal{O}(1)$.

\begin{figure}[tb]\centering\scalebox{0.7}{\begin{quantikz}[row sep={5mm,between origins}, column sep=2mm, font=\Large]\lstick{$\ket{x_{1[0]}}$} & \qw & [6mm]\qw \slice{1} & [6mm]\ctrl{1} \gategroup[2,steps=1,style={dotted,fill=olive!20, inner xsep=1pt, inner ysep=-1pt}, background,label style={label position=above,anchor=south,yshift=-2mm}]{} & [-2mm]\qw \slice{2} & [6mm]\gate{H} \gategroup[10,steps=5,style={dotted,fill=red!20, inner xsep=1pt, inner ysep=1pt}, background,label style={label position=above,anchor=south,yshift=-2mm}]{} & \gate{X} & [1mm]\ctrl{2} \gategroup[10,steps=1,style={dashed, rounded corners,fill=olive!20, inner xsep=0pt, inner ysep=0pt}, background,label style={label position=above,anchor=south,yshift=-0.2cm}]{} & [1mm]\gate{X} & \gate{H} & [-2mm]\qw \slice{3} & [6mm]\ctrl{1} \gategroup[2,steps=1,style={dotted,fill=olive!20, inner xsep=1pt, inner ysep=-1pt}, background,label style={label position=above,anchor=south,yshift=-2mm}]{} & \qw & [4mm]\qw \\\lstick{$\ket{x_{1[i\neq0]}}$} & \qw \qwbundle{r_1-1} & \qw & \targ{} & \qw & \qw & \qw & \qw & \qw & \qw & \qw & \targ{} & \qw & \qw \\[2mm] \lstick{$\ket{x_{2[0]}}$} & \qw & \qw & \ctrl{1} \gategroup[2,steps=1,style={dotted,fill=olive!20, inner xsep=1pt, inner ysep=-1pt}, background,label style={label position=above,anchor=south,yshift=-2mm}]{} & \qw & \gate{H} & \gate{X} & \ctrl{2} & \gate{X} & \gate{H} & \qw & \ctrl{1} \gategroup[2,steps=1,style={dotted,fill=olive!20, inner xsep=1pt, inner ysep=-1pt}, background,label style={label position=above,anchor=south,yshift=-2mm}]{} & \qw & \qw \\\lstick{$\ket{x_{2[i\neq0]}}$} & \qw \qwbundle{r_2-1} & \qw & \targ{} & \qw & \qw & \qw & \qw & \qw & \qw & \qw & \targ{} & \qw & \qw \\[-1mm] \hspace{-12mm} \vdots & \vdots & & \vdots & & \vdots & \vdots & \vdots & \vdots & \vdots & & \vdots & & \vdots \\[2mm]  \lstick{$\ket{x_{n-1[0]}}$} & \qw & \qw & \ctrl{1} \gategroup[2,steps=1,style={dotted,fill=olive!20, inner xsep=1pt, inner ysep=-1pt}, background,label style={label position=above,anchor=south,yshift=-2mm}]{} & \qw & \gate{H} & \gate{X} & \ctrl{2} \vqw{-1} & \gate{X} & \gate{H} & \qw & \ctrl{1} \gategroup[2,steps=1,style={dotted,fill=olive!20, inner xsep=1pt, inner ysep=-1pt}, background,label style={label position=above,anchor=south,yshift=-2mm}]{} & \qw & \qw \\\lstick{$\ket{v_{n-1[i\neq0]}}$} & \qw \qwbundle{r_{n-1}-1} & \qw & \targ{} & \qw & \qw & \qw & \qw & \qw & \qw & \qw & \targ{} & \qw & \qw \\[2mm]  \lstick{$\ket{x_{n[0]}}$} & \qw & \qw & \ctrl{1} \gategroup[2,steps=1,style={dotted,fill=olive!20, inner xsep=1pt, inner ysep=-1pt}, background,label style={label position=above,anchor=south,yshift=-2mm}]{} & \qw & \gate{H} & \gate{X} & \gate{Z} & \gate{X} & \gate{H} & \qw & \ctrl{1} \gategroup[2,steps=1,style={dotted,fill=olive!20, inner xsep=1pt, inner ysep=-1pt}, background,label style={label position=above,anchor=south,yshift=-2mm}]{} & \qw & \qw \\\lstick{$\ket{x_{n[i\neq0]}}$} & \qw \qwbundle{r_{n}-1} & \qw & \targ{} & \qw & \qw & \qw & \qw & \qw & \qw & \qw & \targ{} & \qw & \qw \\\lstick{$\ket{s}$} & \qw & \qw & \qw & \qw & \qw & \qw & \ctrl{-2} & \qw & \qw & \qw & \qw & \qw & \qw \end{quantikz}}\caption{The parallel diffuser scheme with additional control qubit. $x_{j[0]}$ denotes the 0th qubit in the variable qubit group $x_j$, while $x_{j[i\neq0]}$ denotes all other qubits in the variable qubit group $x_j$ except the 0th one. } \label{fig:ParallelDiffuserWithCtrl}\end{figure}
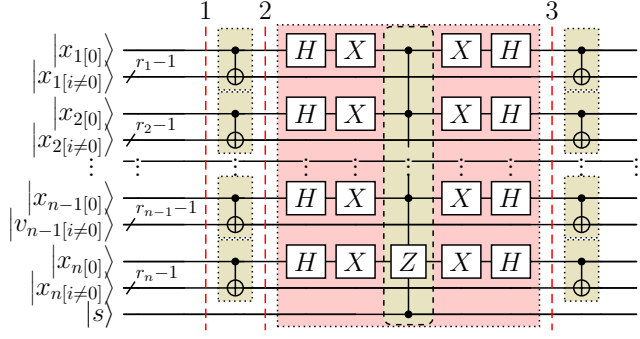

The circuit of the controlled diffuser is shown in Fig.\ref{fig:ParallelDiffuserWithCtrl}.  Unlike the traditional diffuser, the first step first involves disentangling distinct qubits across all variable qubit groups; this step is designed to correctly amplify the target states marked by the controlled oracle. In the second step, all variable qubits each successively pass through a Hadamard gate and an X gate. They then pass through a multi-controlled Z gate, which is also controlled by the control qubit. Subsequently, all variable qubits each successively pass through an X gate and a Hadamard gate. Finally, their entanglement is re-established within the variable qubit groups. After applying the controlled oracle and the controlled diffuser respectively, we then measure the control qubit. If the measurement result is 0, the evolution is halted. At this point, measure any one qubit in each variable qubit group; combining these distinct measurement results yields a target data. If the measurement result is 1, update $\phi_t$ and continue the iteration. The update rule for $\phi_t$ will be introduced later in the text.

The Theorem.\ref{theorem1} demonstrates the correctness of the algorithm.

\begin{theorem}\label{theorem1}If a SAT formula has at least one solution, then the PDFP search algorithm is guaranteed to correctly output the uniform superposition state of the target states. The update rule for $\phi_t$ can be chosen as \(\cos\phi_t = \frac{1 - \sin(\pi/(2t))}{1 + \sin(\pi/(2t))}\).\end{theorem}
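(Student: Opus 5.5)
Our strategy has two stages. First we show that, on the encoded subspace, the parallel circuits act exactly as Mizel's measurement-driven fixed-point search~\cite{mizel2009critically}. Then we import his convergence analysis, which yields the stated rule for $\phi_t$.

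\textbf{Step 1: reduction to an effective $n$-qubit picture.} We introduce the encoding isometry $E$ that maps $|x_1\cdots x_n\rangle$ to the product of GHZ-type codewords $|x_j x_j\cdots x_j\rangle$ on each variable qubit group. The initial state is then $E|\psi_0\rangle\otimes|0\rangle_F\otimes|1\rangle_s$, since the GHZ states together with the $|+\rangle$ qubits are exactly $E$ applied to the uniform superposition. We check two facts about the controlled oracle of Fig.~\ref{fig:ParallelOracleControlled}.
\begin{itemize}
\item On the code space, each clause circuit $C_i$ computes the true clause value, because every copy in a group carries the same bit. The multi-controlled NOT therefore writes $f(x)$ into $F$.
\item The $Z$ on $F$, controlled by $s$, applies the phase $(-1)^{f(x)}$ only on the $|1\rangle_s$ branch. $\Omega^{-1}$ then uncomputes the clause and formula qubits, so $F$ returns to $|0\rangle$.
\end{itemize}
Together with the $R_y(\pm\phi_t)$ conjugation on $s$, the oracle acts as $E\,e^{i\pi Q\otimes R(\phi_t)}E^\dagger$, where $R(\phi_t)=R_y(-\phi_t)|1\rangle\langle1|R_y(\phi_t)$ is the rotated projector on the control qubit. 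This is precisely Mizel's controlled oracle.

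For the diffuser of Fig.~\ref{fig:ParallelDiffuserWithCtrl}, the CNOT fan-out maps $E|x\rangle$ to $|x\rangle\otimes|0\cdots0\rangle$ on the auxiliary qubits. The middle block is the standard diffuser $-e^{i\pi|\psi_0\rangle\langle\psi_0|}$ on the representative qubits, controlled by $s$. The second fan-out re-encodes the state. Hence the controlled diffuser acts as $E\,G_{ctrl}E^\dagger$ on the code space. Both operators preserve the code space, so by induction the whole evolution, including the mid-circuit measurements of $s$, is unitarily equivalent to Mizel's algorithm on the two-dimensional span of $|\psi^*\rangle$ and $|\psi^*_\perp\rangle$ tensored with the control qubit.

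\textbf{Step 2: the measurement-driven dynamics.} Inside $\mathrm{span}\{|\psi^*\rangle,|\psi^*_\perp\rangle\}\otimes\mathbb{C}^2_s$, we write the one-step operator $U_t=G_{ctrl}O_{ctrl}(\phi_t)$ using the Pauli-like operators $A$ and $B$. We then compute the Kraus operators $K^{(0)}_t=\langle0|_sU_t|1\rangle_s$ and $K^{(1)}_t=\langle1|_sU_t|1\rangle_s$.

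The key structural fact to verify is that $K^{(0)}_t$ maps $|\psi^*_\perp\rangle$ to zero, so that it is proportional to $|\psi^*\rangle\langle\cdot|$. If so, outcome $0$ always leaves the variable register in $E|\psi^*\rangle$, the uniform superposition of solutions, with $F=|0\rangle$. This is the correctness claim.

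The failure branch $K^{(1)}_t$ keeps the state inside the two-dimensional space, and after renormalization it rotates the state. Following Mizel's critical-damping analysis, the choice $\cos\phi_t=\frac{1-\sin(\pi/(2t))}{1+\sin(\pi/(2t))}$ makes the probability of continuing through $t$ rounds decay to $0$ for every $\theta>0$. In particular it decays on a scale $t=\mathcal{O}(1/\theta)=\mathcal{O}(\sqrt{N/M})$, without knowledge of $M$. So termination occurs with probability one, provided $M\ge1$, which makes $\theta>0$.

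\textbf{Main obstacle.} We expect the hardest part to be the explicit $2\times2$ computation in Step 2. It must show both that $K^{(0)}_t$ annihilates $|\psi^*_\perp\rangle$ and that the product $\prod_t\|K^{(1)}_t\cdots\|$ vanishes under the given $\phi_t$ schedule. Our first attempt will be to write $U_t$ in the basis $\{|\psi^*\rangle,|\psi^*_\perp\rangle\}\otimes\{|0\rangle,|1\rangle\}$ with $\sin(\theta/2)$ entries. We will then check the annihilation identity directly, and defer the asymptotic decay bound to~\cite{mizel2009critically}, to which our dynamics has been shown identical.
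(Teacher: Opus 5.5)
Your proposal is correct, but it reaches the theorem by a different route from the paper. The paper asserts without proof that the oracle--diffuser pair implements $U_{\mathrm{ctrl}}$. It then passes to the unrecorded-measurement density matrix and tracks the linear recursion $E_t$ on $(\mathrm{Tr}\,\rho_vA,\mathrm{Tr}\,\rho_vO,\mathrm{Tr}\,\rho_v)$. The fact that the $|0\rangle_s$ branch carries the target state is taken for granted in its density-matrix form, and termination is argued through spectral radii of $E_1\cdots E_t$.

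You differ in three ways:
(i) You actually verify, via the isometry $E$ and the CNOT fan-out, that the GHZ-encoded parallel circuits act as the logical operators on the code space. This is the part specific to this paper, and the paper skips it.
(ii) You obtain correctness of the output from an exact per-round identity rather than from a limit.
(iii) You import termination from Mizel.

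Your \emph{main obstacle} is in fact a one-line check. The oracle rotates $s$ only on the range of $Q$, and the diffuser is trivial on $s=0$. Hence $K^{(0)}_t=-\sin\phi_t\,Q$ and $K^{(1)}_t=\pm G\,[(I-Q)-\cos\phi_t\,Q]$. The diffuser's middle block is $I-2|\psi_0\rangle\langle\psi_0|=-G$, i.e.\ controlled-$G$ up to a $Z$ on $s$, which is harmless right before a $Z$-basis measurement.

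This trajectory form even lets you drop the deferral. Since $G$ is unitary, round $t$ multiplies the survival probability by $1-\sin^2\phi_t\,p_t$, where $p_t$ is the conditional target population. The factor $\sin^2\phi_t\sim 2\pi/t$ is not summable. For large $t$, the normalized one-round map is uniformly close to a Grover rotation by $2\vartheta\in(0,\pi)$, where $\sin\vartheta=\sqrt{M/N}$ and $0<M<N$. So $\max(p_t,p_{t+1})$ stays above a positive constant depending only on $M/N$; if $M=N$, then $p_t\equiv 1$. Hence survival tends to $0$.

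That argument is sounder than the paper's step $R(E_1\cdots E_t)\le\prod_t R(E_t)$. That step uses a submultiplicativity which the spectral radius lacks for non-commuting matrices. It also relies on an $\epsilon$-norm that depends on the ($t$-dependent) matrix being bounded.
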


\begin{proof}It can be seen that executing the controlled oracle circuit and the controlled diffuser circuit in sequence is equivalent to implementing the following unitary matrix:

\begin{equation}\begin{split}\label{proof1}
 U_{\text{ctrl}} &= (G\frac{I_2-S_z}{2}+\frac{I_2+S_z}{2})e^{i\phi_tS_y/2}(O\frac{I_2-S_z}{2}+\frac{I_2+S_z}{2})e^{-i\phi_tS_y/2}\\
 &=(G\frac{I_2-S_z}{2}+\frac{I_2+S_z}{2})(e^{-i\phi_tS_y}\frac{I_n-O}{2}+\frac{I_n+O}{2}).
 \end{split}
\end{equation}

To proceed from the first line to the second, we utilize the properties \(S_z^2 = I_2\) and \(O^2 = I_v\), where \(I_2\) denotes the identity matrix in the control qubit register and \(I_v\) denotes the identity matrix in the variable qubit register, respectively. Eqn.\ref{proof1} shows that sequentially applying the controlled oracle circuit and the controlled diffuser circuit is equivalent to the following process: first, a rotation of the control qubit is performed, conditioned on the oracle—if \(O|\psi_{\text{v}}\rangle = -|\psi_{\text{v}}\rangle\) (with \(|\psi_{\text{v}}\rangle\) representing the quantum state of the variable register), the control qubit is flipped; otherwise, it remains unchanged. This is followed by a $GO$ evolution conditioned on the control qubit. The overall effect is that as the system approaches the target state, the control qubit gradually ceases to induce the $GO$ operation.Each time a measurement is performed without recording the result \cite{mizel2009critically}, the density matrix of the total system can be expressed as \((1-\text{Tr}(\rho))|\psi_v\rangle\langle \psi_v| \otimes |0\rangle\langle 0|_s + \rho_v \otimes |1\rangle\langle 1|_s\), where \(\rho_v\) denotes the quantum state of the variable register. After each iteration, \(\rho_v\) is updated to:

\[\left[\begin{array}{c} \mathrm{Tr}\left(\rho_v' A\right) \\ \mathrm{Tr}\left(\rho_v' O\right) \\ \mathrm{Tr}\left(\rho_v'\right) \end{array}\right]=\left[\begin{array}{ccc} \cos 2\theta \cos\phi_t & \sin 2\theta \cdot \frac{1+\cos^2\phi_t}{2} & \sin 2\theta \cdot \frac{1-\cos^2\phi_t}{2} \\ -\sin 2\theta \cos\phi_t & \cos 2\theta \cdot \frac{1+\cos^2\phi_t}{2} & \cos 2\theta \cdot \frac{1-\cos^2\phi_t}{2} \\ 0 & \frac{1-\cos^2\phi_t}{2} & \frac{1+\cos^2\phi_t}{2} \end{array}\right]\left[\begin{array}{c} \mathrm{Tr}(\rho_v A) \\ \mathrm{Tr}(\rho_v O) \\ \mathrm{Tr}(\rho_v) \end{array}\label{proof2}\right]\]

There is a critical $\phi_t$ defined by \(\cos\phi_t = \frac{1 - \sin\theta}{1 + \sin\theta}\), where the three eigenvalues of the matrix $E_t$ in Eqn.\ref{proof2} all equal \(\cos\phi_t\). In this circumstance, the three components \(\text{Tr}(\rho_v A)\), \(\text{Tr}(\rho_v O)\), and \(\text{Tr}(\rho_v)\) all tend to diminish as the iterations increase. Consequently, the probability of finding the target state increases with time. When $M$ is unknown, the value of \(\phi\) corresponding to critical damping remains unspecified. Nevertheless, for iterations where \(n > 1\), one can select \(\cos\phi_t = \frac{1 - \sin(\pi/(2t))}{1 + \sin(\pi/(2t))}\) \cite{mizel2009critically}. When \( n \) is relatively small, the search is nearly classical; when \( n \) is large, the search reverts to the original Grover search. The advantage of choosing such an update rule for \( \phi_t \) lies in the fact that if the number of solutions is comparable to \( N \), the initial few iterations ensure a high probability of finding a solution. Of course, this is not the only update rule, and the rules themselves are even worthy of further investigation. We next prove that under such an update rule, the algorithm will eventually obtain the correct solution.

The first step of the proof lies in the fact that the magnitudes of all eigenvalues are less than $1$, and that the magnitudes of the eigenvalues equal $1$ if and only if \(\phi = 0\)\cite{mizel2009critically}, i.e., when the control qubit ceases to function. the spectral radius $R(E_t)$ is defined as the maximum magnitude of all eigenvalues of $E_t$,

\begin{equation}R(E_t) = \max_{\lambda \in \sigma(E_t)} |\lambda|,\end{equation}

where $\sigma(E_t)$ denotes the set of eigenvalues of $E_t$. For the update rule \(\cos\phi_t = \frac{1 - \sin(\pi/(2t))}{1 + \sin(\pi/(2t))}\), $R(E_t)\leq1$, where the equality holds as $t\rightarrow\infty$. By the submultiplicative property of the spectral radius, $R(E_1E_2E_3...E_t)\leq R(E_1)R(E_2)R(E_3)...R(E_t)$, $\lim_{t\rightarrow\infty}R(E_1)R(E_2)R(E_3)...R(E_t)=0 $. For any matrix \( E\) and \(\epsilon > 0\), there exists an operator norm \(\|\cdot\|_\epsilon\) such that \(\|E\|_\epsilon \leq R(E) + \epsilon\) \cite{horn2012matrix}. Consider $u_t=E_1E_2E_3...E_tu_0$, it satisfies
\begin{equation}\begin{split}|| u_t||_\epsilon&=||E_1E_2E_3...E_tu_0||_\epsilon\\&\leq||E_1E_2E_3...E_t||_\epsilon \cdot||u_0||_\epsilon\\&\leq (R(E_1E_2E_3...E_t)+\epsilon)\cdot||u_0||_\epsilon.\end{split}\end{equation}
As $t\rightarrow\infty$, $\lim_{t\rightarrow\infty}R(E_1E_2E_3...E_t)=0$.By the arbitrariness of \(\epsilon\), \(\lim_{t \to \infty} \|\boldsymbol{u}_t\|_\epsilon = 0\). \(u_t\) is defined as a vector encoding non-target state components. Thus, the algorithm will flip the controlled qubit deterministically, and we obtain the target state.

\end{proof}

We now finalize the proof of the algorithm's convergence. Furthermore, numerical results show that the number of queries required by the algorithm is at most $1.5$ times that of Grover's algorithm \cite{mizel2009critically}. 

The query complexity of the algorithm remains \(O(\sqrt{N})\). Thus, with relatively low overhead, we can implement a fixed-point quantum search algorithm when the number of solutions is unknown. This algorithm is clearly superior to the approach of first performing quantum counting and then executing Grover's algorithm. Furthermore, due to the parallelism of the oracle, the overall runtime of the algorithm can be significantly reduced. However, in the current era, the number of available qubits in each quantum computer is limited; therefore, it is natural to consider computing the algorithm distributively across different computers. So, how can this algorithm be implemented in a distributed manner?

\subsection{Distributed computing based on teleportation}

First, we note that only some multi-controlled gates require distributed processing; these gates are marked in olive in Figs.\ref{fig:ParallelOracleControlled} and \ref{fig:ParallelDiffuserWithCtrl}. In this paper, we consider a scheme based on teleportation \cite{sarvaghad2021general,lin2024parallel}. The procedure is shown in Fig.\ref{fig:mCUgate}. To perform multi-controlled gate operations in a distributed manner, we first prepare a large number of Bell pairs \(|e_i\rangle|\tilde{e}_i\rangle = \frac{1}{\sqrt{2}}(|00\rangle+|11\rangle)\), and distribute the qubits \(e_i\) and \(\tilde{e}_i\) to different sub-nodes and the master node, respectively. Then we proceed as follows:

In the first step, each node \(i\) applies a $CNOT$ operation to the qubit pair \(|p_i\rangle|e_i\rangle\), then measures Pauli-$Z$ of \(|e_i\rangle\). The result of this measurement is transmitted to the master node through a classical communication channel. The master node applies an $X$ gate to \(|\tilde{e}_i\rangle\) according to the measurement result is whether \(|1\rangle\) or not; if not, no operation is performed. After measurement, the state of \(e_i\) is deterministic, and the two qubits \(|p_i\rangle|\tilde{e}_i\rangle\) become entangled, taking the form of \(\gamma_i|00\rangle+\lambda_i|11\rangle\).  

In the second step, given that \(|p_i\rangle\) and \(|\tilde{e}_i\rangle\) now share the same state due to entanglement, using \(\tilde{e}_i\) as the i-th control qubit for the controlled-U gate is equivalent to using \(p_i\) as the i-th control qubits.  

In the third step, we disentangle \(p_i\) and \(\tilde{e}_i\). To achieve this, the master node measures Pauli-$X$ of the \(\tilde{e}_i\).  Then sends this measurement result to node \(i\) via a classical channel. Upon receiving the result, we apply a $Z$ gate to \(p_i\) if the outcome is \(|-\rangle\); otherwise, no operation is performed. 

Once each node completes these steps, the controlled-U gate operation is successfully executed in a distributed manner across nodes.
The proof of the correctness for this scheme can find in \cite{sarvaghad2021general,lin2024parallel}. Thus, each sub-node computes only part of the clauses and returns partial results to the client. This distributed protocol can protect the client’s privacy to a certain extent.

\begin{figure}[tb]
\centering
\scalebox{0.65}{
\tikzset{
my label/.append style={above right,xshift=1mm}
}
\begin{quantikz}[row sep={6mm,between origins}, column sep=5mm, font=\Large]
\lstick{\ket{p_1}} & [-3mm] \ctrl{1} & [-3mm] \qw & [-5mm] \qw & [-3mm] \qw \slice{1} & \qw \slice{2} & [2mm] \gate{Z} & [-6mm] \qw & [-5mm] \qw \slice{3} & \qw \\
\lstick{\ket{e_1}} & \targ{} & \meter{0/1} \vcw{7} & \qw & \qw & \qw & \qw & \qw & \qw & \qw \\[-1mm]
\wave&&&&&&&&&\\[-1mm]
\lstick{\vdots} & & & & \vdots & & & & \vdots & \\[-1mm] 
\wave&&&&&&&&&\\
\lstick{\ket{p_m}} & \qw & \qw & \ctrl{1} & \qw & \qw & \qw & \qw & \gate{Z} & \qw \\
\lstick{\ket{e_m}} & \qw & \qw & \targ{} & \meter{0/1} \vcw{4} & \qw & \qw & \qw & \qw & \qw \\
\wave&&&&&&&&& \\[1mm]
\lstick{\ket{\Tilde{e}_1}} & \qw & \gate{X} & \qw & \qw & \ctrl{1} \gategroup[4,steps=1,style={dashed,
rounded corners,fill=olive!20, inner xsep=0.1pt}, background,label style={label position=below,anchor=north,yshift=-2mm}]{{}} & \meter[style={label position=below right}]{\ket{\pm}} \vcw{-8} & \qw & \qw & \qw \\
\hspace{-8mm} \vdots & & & \cdots  & & \vdots & & \cdots & & \\
\lstick{\ket{\Tilde{e}_m}} & \qw & \qw & \qw & \gate{X} & \ctrl{1} \vqw{-1} & \qw & \qw & \meter[style={label position=below right}]{\ket{\pm}} \vcw{-5} & \qw \\
\lstick{\ket{t}} & \qw & \qw & \qw & \qw & \gate{U} & \qw & \qw & \qw & \rstick{\ket{t'}} \qw
\end{quantikz}
\hspace{-2mm}
=
\begin{quantikz}[row sep={7mm,between origins}, column sep=2mm, font=\Large]
\lstick{\ket{p_1}} & \ctrl{1} \gategroup[4,steps=1,style={dashed,
rounded corners,fill=olive!20, inner xsep=0.1pt}, background,label style={label position=below,anchor=north,yshift=-2mm}]{{}} & \qw \\
\hspace{-9mm} \vdots & \vdots \\
\lstick{\ket{p_m}} & \ctrl{1} \vqw{-1} & \qw \\
\lstick{\ket{t}} & \gate{U} & \qw
\end{quantikz}
}
\caption{The scheme for the distributed $m$-controlled-$U$ gate.}\label{fig:mCUgate}
\end{figure}
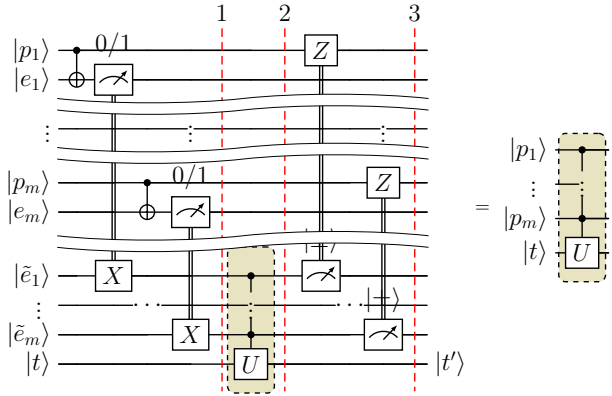

\subsection{Numerical result}

To test the effectiveness of our algorithm, we apply it to solving a simple SAT formula, which has a unique solution: \(a=1\), \(b=1\), \(c=1\). For comparison, we first solve the problem using Grover's algorithm. It can be observed that Grover's algorithm can obtain the solution with high probability in only two steps, and the probability of finding the solution approaches 1 at the sixth step. However, as anticipated, the success probability exhibits oscillatory behavior with the number of iterations.Next, we solve the problem using the PFP search algorithm. We find that whether using fixed critical \(\phi\) values or updated \(\phi\) values (which are applicable to scenarios where the number of solutions is known and unknown, respectively), Grover's algorithm outperforms PFP in the first two steps. Nevertheless, as the number of steps increases, the success probability of the PFP algorithm increases monotonically and quickly becomes sufficiently close to 1. Thus, this algorithm successfully overcomes the soufflé problem.
\begin{figure}    \centering    \includegraphics[width=1\linewidth]{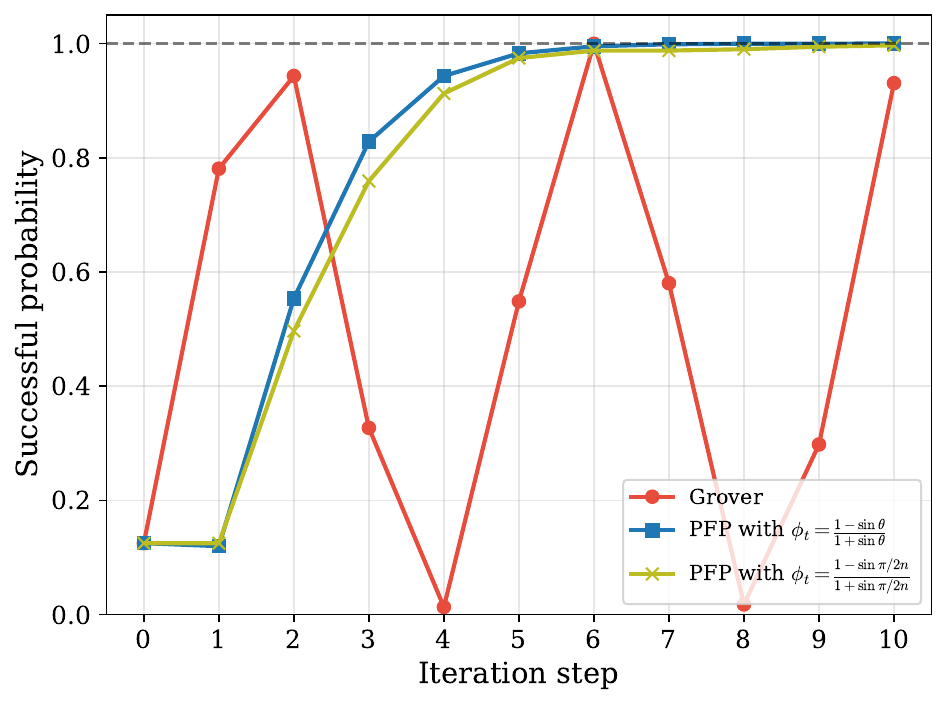}    \caption{Numerical results for solving SAT formula $(a)\land(a\lor b)\land(a\lor c)$ with Grover's algorithm and PFP search algorithm.}    \label{fig:numercialresult}\end{figure}

\section{Summary}\label{Summary}

In this paper, we propose a parallel fixed-point quantum search algorithm for solving the SAT problem. This algorithm guarantees the deterministic discovery of solutions while retaining quadratic speedup. Furthermore, parallel computation can linearly reduce the total runtime, making the algorithm well-suited for the current NISQ era. We also discuss how to execute this algorithm in a distributed manner, which will further address the issue of insufficient available qubits in a single quantum computer during the NISQ era. Future research directions include validating the algorithm on actual quantum hardware and designing more effective \(\phi_t\) update rules.

\bibliographystyle{plainnat}
\bibliography{mybibliography}

\end{document}